\newtheorem{theorem}{Theorem}
\def\BibTeX{{\rm B\kern-.05em{\sc i\kern-.025em b}\kern-.08em
    T\kern-.1667em\lower.7ex\hbox{E}\kern-.125emX}}
\DeclareMathOperator*{\argmin}{arg\,min}
\newtheorem{prop}{Proposition}
\newcommand{\ym}[1]{\textcolor{black}{#1}}
\newcommand{\sm}[1]{\textcolor{black}{#1}}
\begin{document}

\title{Alternative paths computation for congestion mitigation in segment-routing networks
}

\author{\IEEEauthorblockN{Sébastien Martin ~~~~~~~ Youcef Magnouche ~~~~~~~ Paolo Medagliani ~~~~~~~ Jérémie Leguay } 
\IEEEauthorblockA{\textit{Huawei Technologies Ltd., Paris Research Center, France} \ \\
firstname.lastname@huawei.com}
}

\maketitle

\begin{abstract}
In backbone networks, it is fundamental to quickly protect traffic against any unexpected event, such as failures or congestions, which may impact Quality of Service (QoS). Standard solutions based on Segment Routing (SR), such as Topology-Independent Loop-Free Alternate (TI-LFA), are used in practice to handle failures, but no distributed solutions exist for distributed and tactical congestion mitigation. A promising approach leveraging SR has been recently proposed to quickly steer traffic away from congested links over alternative paths. As the pre-computation of alternative paths plays a paramount role to efficiently mitigating congestions, we investigate the associated path computation problem aiming at maximizing the amount of traffic that can be rerouted as well as the resilience against any 1-link failure. In particular, we focus on two variants of this problem. First, we maximize the residual flow after all possible failures. We show that the problem is NP-Hard,  and we solve it via a Benders decomposition algorithm. Then, to provide a practical and scalable solution, \ym{we solve a relaxed variant problem, that maximizes, instead of flow}, the number of surviving alternative paths after all possible failures. We provide a polynomial algorithm. Through numerical experiments, we compare the two variants and show that they allow to increase the amount of rerouted traffic and the resiliency of the network after any 1-link failure.
\end{abstract}

\begin{IEEEkeywords}
Segment Routing, Alternative paths, $k$-splittable flows problem, Max flow, Congestion mitigation, Benders Decomposition, Integer linear Programming, Optimization.
\end{IEEEkeywords}

\section{Introduction}
Telecommunication operators constantly face a dilemma when they need to operate their networks. On one side, they need to keep the utilization as high as possible, to maximize the return of investments. On the other side, they need to keep the utilization as low as possible to provide high Quality of Service (QoS) to users. 
Efficient network planning helps in finding the right trade-off by playing on capacity sizing and routing strategies. The link capacities can be designed to support a given estimated amount of traffic in the network, avoiding saturation. However, as capacities cannot be infinite or easily adjusted when traffic conditions evolve, choosing a suitable routing strategy is also crucial.

In the largest majority of backbone networks, Best Effort (BE) traffic follows the shortest paths from Interior Gateway Protocols (IGP), where link metrics are typically designed to be inversely proportional
to the capacity, to capture the willingness of the operator
to attract traffic over large-capacity links. On top of that,  flows with specific requirements, for instance in terms of QoS, can be "engineered" to follow specific paths, different from default ones. Segment Routing (SR)~\cite{ventre2020segment} provides a powerful solution to steer engineered traffic by defining intermediary segments or way-points that the packets must follow.

Despite that optimal network planning can be achieved for a given set of expected traffic scenarios, the experienced amount of traffic can vary during normal network operations, for instance, due to unexpected bursts of traffic (i.e., large data transfers, video streaming, ...) or link failures, which can frequently happen in practice. 
Typically, in order to avoid costly over-provisioning, the network capacity is planned to support the failure of one link based on plausible traffic scenarios.
However, as traffic evolves over time and multiple failures can happen, congestion can still be introduced on some links, introducing packet loss and degrading QoS. 

An efficient way to mitigate congestion is to leverage a distributed Congestion Mitigation (CM) mechanism, such as the one recently introduced in~\cite{10302752}, which locally steers part of the traffic away from a congested link over alternative paths. When a router detects congestion on an adjacent link, a portion of the traffic is automatically offloaded and load balanced over a set of alternative paths using UCMP (Unequal Cost Multi Paths). The goal is to select lightly loaded paths to reroute a maximum of traffic, mitigate the congestion, and avoid creating new congestion elsewhere in the network. As congestions need to be quickly handled,  alternative paths are actually pre-computed without knowing loads at congestion time and selected, on-the-fly, with particular weights (or split ratios) based on actual traffic conditions to load balance traffic.
This mechanism is efficient as it quickly reacts and reroute incoming traffic. However, it requires a careful path design, in order to avoid introducing further congestion in the network or using alternative paths with failed links. A simple approach, such as K-Shortest Path (KSP)~\cite{yen1971finding}, is not effective as it does not take into account diversity among paths to load balance traffic. The Neighbor Deviation Algorithm (NDA), presented in~\cite{10302752}, based on an extension of the Dijkstra shortest path, optimizes the local diversity around the node detecting the congestion but does not consider remote diversity. Also, none of these solutions consider resilience against failures.

In this paper, we extend the work of~\cite{10302752} by considering the maximum flow that can be potentially rerouted over alternative paths after any 1-link failure. Indeed, the main scenario considered in practical deployments is that congestions are generally introduced after the failure of a link. We show that the associated pre-computation problem is NP-Hard, and we solve it via a Benders decomposition algorithm. Then, to provide a practical and scalable solution, \ym{we solve a relaxed variant problem, that maximize, instead of flow}, the number of surviving alternative paths after all possible failures. We provide a polynomial algorithm. Through numerical experiments,
we compare the two variants and show they allow to increase the amount of rerouted traffic and the resiliency of the network
after any 1-link failure.


The structure of the paper is the following. Sec.~\ref{sec:RelatedWork} presents the relevant state of the art and Sec.~\ref{sec:CongestionMitigaiton} describes the congestion mitigation mechanism our work is extending. In Sec.~\ref{sec:AlternativePathComputaion}, we give a formal description of the alternative paths' computation problem and provide the associated mathematical model. Then, we derive a Benders Decomposition algorithm for the problem. In Sec.~\ref{sec:Problem_relaxation}, we propose a slightly different variant of the problem that is polynomial to solve. Finally, in Sec.~\ref{sec:Experimentation} we show computational results. Sec.~\ref{sec:Conclusion} concludes this paper.
 
\section{Related Work}\label{sec:RelatedWork}

Several mechanisms have been proposed for tactical uses of SR. TI-LFA has been proposed to handle single failures (nodes, link or SRLG~\cite{ietf-rtgwg-segment-routing-ti-lfa-12, 10108280}). For multiple failures, \cite{foerster2018ti} proposed to compute several backup paths to consider different failure scenarios. 
SR extensions have also been proposed for micro-loop avoidance~\cite{bashandy-rtgwg-segment-routing-uloop-15, hegde-rtgwg-microloop-avoidance-using-spring-03}.



{
Although, max-flow \cite{goldberg1988new} and min-cost flow \cite{10.1145/76359.76368} problems can be solved in polynomial time, bounding the number of paths that the flow can take makes them harder to solve. In \cite{6195830}, authors investigate the problem of decomposing a flow into a minimum number of paths. They show that the problem is NP-hard and analyze the performance of a greedy approach. In \cite{koch2008maximum}, they investigate a similar problem, called the maximum k-splittable flows problem, that consists in computing exactly $k$ paths that maximize the flow between given source and destination nodes. They show that the problem is NP-hard for undirected graphs with $k=2$. In \cite{TRUFFOT2008629}, the authors propose a compact model for the maximum k-splittable flows problem and derive an extended formulation thanks to the Dantzig-Wolfe decomposition, for which they developed a Branch-and-Price algorithm. \\
However, considering $k\geq 2$ paths does not guarantee that the flow can be rerouted after any 1-link failure, as the paths can use the same link. This requires considering additional resiliency constraints on top of the maximum $k$-splittable flows problem. In~\cite{bendali2007k}, the authors investigate the $k$-edge connected subgraph problem that consists in computing $k + 1$ totally disjoint paths between each pair of nodes, which is resilient to any $k$ link failures. } \\
\sm{Resilience against failures can be seen as an interdiction or a blocker version of the max flow problem. The interdiction maximum flow problem consists in finding a set of links under a given budget, such that removing them reduces the remaining maximum flow. The blocker maximum flow problem consists in finding the minimum number of link removal that reduces the remaining maximum flow below a given threshold. 
In~\cite{MFI}, the authors propose an integer linear formulation to solve the interdiction version of the max flow problem, and in~\cite{MFB} they consider the blocker version, where they propose a branch-and-benders cut to tackle it. }
{In contrast to existing works, we focus on the worst-case scenario, i.e.,  the most impacting 1-link failure in terms of resiliency and flow to be rerouted over alternative paths.}

\section{Congestion Mitigation} \label{sec:CongestionMitigaiton}
As presented in~\cite{10302752}, the CM mechanism applies to backbone networks where Provider (P) nodes route traffic coming from Provider Edge (PE) nodes at the boundaries of the network.  At network startup, each router pre-computes up to $K$ alternative paths towards each PE node and for each potentially adjacent congested link. As SR is used for routing, the alternative paths are translated into Segment ID (SID) lists, stored in the routing table of the P nodes and activated on the fly when needed.

As soon as the load of a link exceeds a given threshold $Th_{\rm H}$, i.e. 70\%, the associated P node locally offloads traffic over the pre-computed alternative paths.
For each incoming packet to be rerouted, the P node stitches into the SR Header (SRH), the SID list of the chosen alternative path and selects the proper outgoing interface, forwarding the packet on it.  

The traffic is load-balanced by assigning Unequal Cost Multiple Path (UCMP) weights to each alternative path. As it is not possible to determine the weights to be used in the case of congestion, they are decided when congestion mitigation is activated according to actual traffic conditions. A second threshold, referred to as target threshold ($Th_{\rm T}$), indicates the target link load (e.g., 50\%) to reach after mitigation on the congested link. It is used 
to determine the amount of traffic to offload. More details on the computation of weights can be found in~\cite{10302752}.


As traffic cannot be known in advance, 
pre-computed alternative paths must have the following properties:
\sm{\begin{itemize}
    \item \emph{maximal available bandwidth}: 
    the alternative paths must maximize the flow that can be rerouted through them for each possible 1-link failure. 
    \item \emph{routing cost}: the alternative paths must minimize the routing cost.
\end{itemize}
}
\sm{In the next section, we will describe how to achieve these criteria. For the description of the algorithms, we assume that the congested link is removed from the network, and we directly consider a source, i.e., the node detecting the congestion, and a destination, i.e., a PE router. In~Sec.~\ref{sec:Experimentation}, we will detail how the source and the destination are selected.}

\section{Alternative paths' computation}\label{sec:AlternativePathComputaion}
Let $G=(V,A)$ be a directed graph where $V$ is the set of vertices and $A$ is the set of links. For each link $a\in A$, the capacity is denoted $b_a$ and the cost $c_a$. 
Let $c(p)$ be the cost of a path $p$, i.e., $c(p)=\sum_{a\in p} c_a$.
We define $mincap(p)=\min_{a\in p} b_a$. Let $G\setminus A'$, $A'\subseteq A$, be the subgraph of $G$ where links of $A'$ are removed.  \\
The alternative paths' computation problem (APCP) consists in finding a set $P$ of $k$ paths between $s$ and $t$, where $s,t\in V$ are the source and destination nodes respectively, such that
\begin{itemize}[leftmargin=5.6em]
    \item[\textbf{Objective 1}:] the worst maximum flow over every possible 1-link failure is maximized, i.e., $\max \limits_{a\in A} maxflow(G\setminus \{a\})$,
     \item[\textbf{Objective 2}:] the total cost is minimized, i.e., $\min\sum \limits_{p\in P}c(p)$,
\end{itemize}
where these two objectives are considered in lexicographic order, i.e., objective 1) $>>$ objective 2). To manage the lexicographic order between the two objectives, we use a bigM value $M_1=k\sum_{a\in A}c_a$. This ensures that each unity of flow costs more than any path in the network. 
\begin{figure}[!t]
\centering
\includegraphics[page=1,scale=.20]{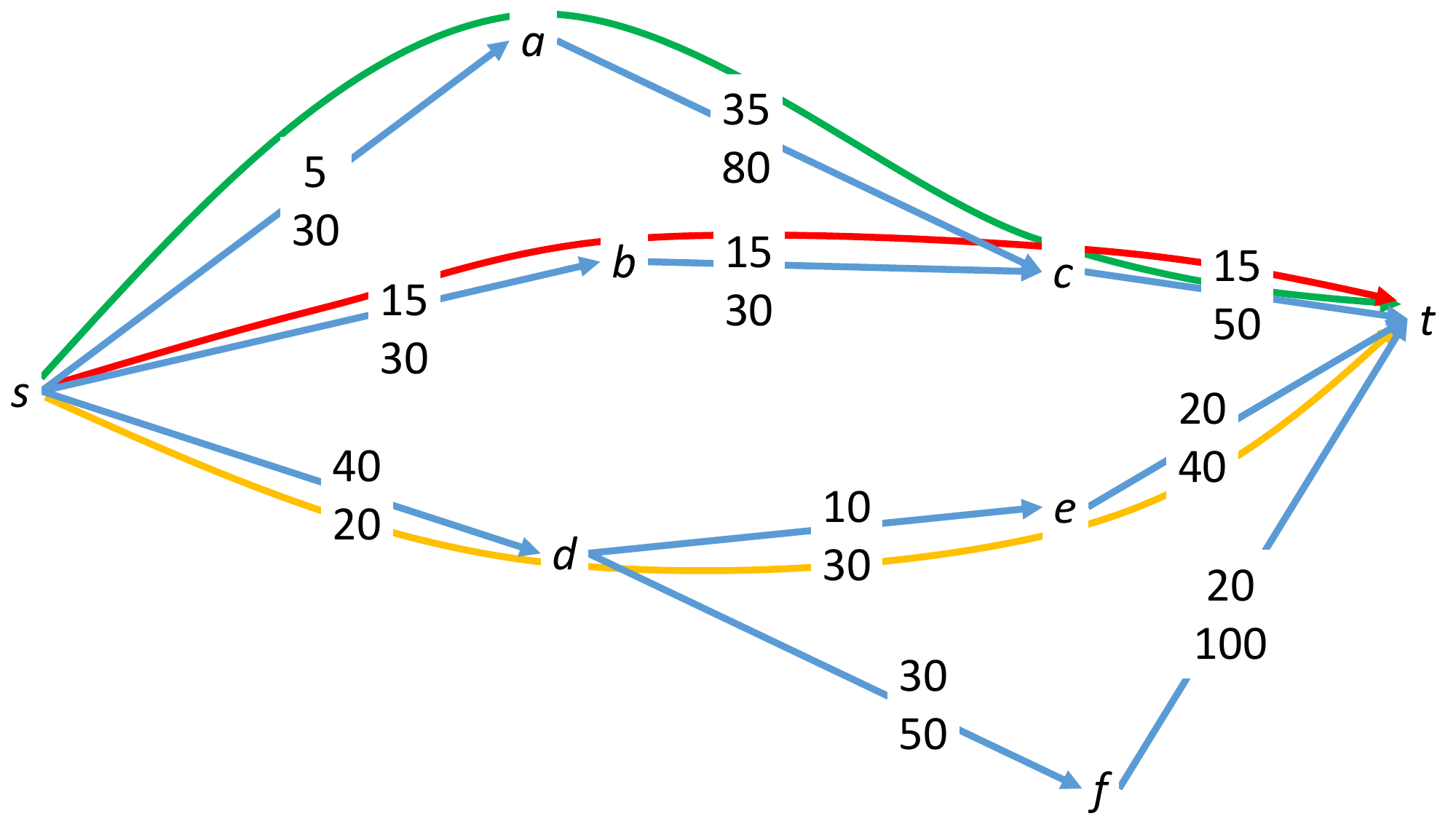}
\caption{Example of solution for APCP with $k=3$}
\label{APCP:fig}
\vspace{-5mm}
\end{figure}

Figure \ref{APCP:fig} shows a solution of the APCP with $k=3$. On each link, the top value is the cost and the bottom one is the capacity. The value of the objective 1) is a flow of 20, which corresponds to the maximum flow when the link $(c,t)$ fails. The value of the objective 2) is 170. If $k=2$ then the optimal solution is given by the red path and the yellow path, which are disjoint in this example. 

\begin{theorem}
The alternative paths' computation problem, already by only considering objective 1), is NP-hard, even for $k=3$. 
\end{theorem}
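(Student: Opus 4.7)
The plan is to prove NP-hardness by a polynomial reduction from a well-known NP-hard routing problem. A natural candidate, given the structure of APCP, is the Directed Edge-Disjoint Paths problem of Fortune, Hopcroft and Wyllie, which is already hard for two source-sink pairs. Another plausible source problem is the maximum $k$-splittable flow (NP-hard for $k=2$ on undirected graphs, as recalled in Section~\ref{sec:RelatedWork}), suitably ``directed'' by replacing each undirected edge by a small oriented gadget. I would favor the first option because the resiliency flavor of APCP maps nicely to an edge-disjointness constraint.

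The construction would proceed as follows. Starting from an instance $(H, s_1, t_1, s_2, t_2)$ of Directed $2$-EDP with unit capacities on the edges of $H$, I would build a capacitated digraph $G$ with a super-source $s$ and super-sink $t$. I would add high-capacity arcs $s \to s_1$, $s \to s_2$, $t_1 \to t$, $t_2 \to t$ and, crucially, an auxiliary ``shunt'' sub-gadget from $s$ to $t$ designed so that exactly one of the three required paths is forced to traverse it, leaving the remaining two paths to travel through $H$ from $s_i$ to $t_i$. The capacities on the $H$-side and on the shunt would be tuned, together with a threshold $F^{*}$, so that the worst-case max-flow over all single-link failures reaches $F^{*}$ if and only if the two $H$-paths are edge-disjoint in $H$. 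The intuition is that failing the shunt forces the two $H$-paths to supply the full residual flow, while failing a shared arc inside $H$ (if any) would simultaneously cripple both paths and pull the worst case strictly below $F^{*}$.

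I would then establish the equivalence in both directions: (forward) given edge-disjoint paths $P_1, P_2$ in $H$, extend them to three paths in $G$ using the shunt and verify that any single-link failure preserves a max-flow of at least $F^{*}$; (backward) from any triple of paths achieving worst-case flow $\geq F^{*}$, argue that exactly one path must lie in the shunt and that the other two must be edge-disjoint inside $H$, otherwise failing a commonly used arc would drop the worst-case value below $F^{*}$. Finally, verify that the reduction has polynomial size and that the target value $F^{*}$ can be encoded in polynomial bits.

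The main obstacle is the calibration of the shunt gadget and of the capacities, which must simultaneously enforce three properties: that the shunt is selected in any optimal triple (to fix one path), that failures inside the shunt do not spuriously dominate the worst case, and that failures of arcs in $H$ faithfully penalize non-disjointness. A secondary subtle point is to ensure the hardness at the specific value $k=3$: making the problem degenerate to a trivial two-path case when the shunt is ignored would collapse the reduction, so the shunt sub-gadget must be rigid enough to be ``non-optional'' in any path triple attaining the target worst-case flow.
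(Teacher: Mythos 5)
There is a genuine gap, and it is fatal to the route you favored rather than a mere calibration issue. Merging $s_1,s_2$ into a super-source $s$ and $t_1,t_2$ into a super-sink $t$ destroys exactly the feature that makes Directed 2-EDP NP-hard: the commodity \emph{pairing}. The APCP objective is a pure max-flow value on the subgraph induced by the selected $s$--$t$ paths, and a path of the form $s \to s_1 \leadsto t_2 \to t$ is just as legitimate as $s \to s_1 \leadsto t_1 \to t$; flow values (before or after any single-link failure) are invariant under re-pairing of sources and sinks, so your threshold $F^{*}$ would be attained whenever $H$ contains \emph{any} two edge-disjoint paths connecting $\{s_1,s_2\}$ to $\{t_1,t_2\}$ in either matching. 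That unpaired question is solvable in polynomial time by a unit-capacity max-flow computation, so the reduction would prove nothing. No tuning of the shunt gadget or of capacities can repair this: failures, like capacities, act only on anonymous flow, which carries no commodity labels. The obstacle you describe as ``calibration of the shunt'' is in fact an impossibility, which is the classical reason max-flow arguments cannot capture multicommodity (paired) disjointness.

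Ironically, your discarded second candidate is the paper's actual proof, and it is much simpler than your plan. The paper reduces from the maximum $k$-splittable flow problem (MkSFP, NP-hard for $k=2$ by Koch et al.): add a single direct arc $(s,t)$ whose capacity equals the max flow of the original graph. In an optimal APCP solution with $k+1$ paths, one path is this arc; any failure inside the original graph leaves the direct arc carrying the full max flow, so the binding (worst-case) failure is precisely the failure of $(s,t)$, after which the surviving flow is exactly the max $k$-splittable flow over the remaining $k$ paths. Hence APCP with $k+1=3$ paths solves MkSFP with $k=2$. One observation of yours does have value here: Koch et al.'s $k=2$ hardness is stated for \emph{undirected} graphs while APCP is posed on digraphs, a mismatch the paper silently glosses over; your suggestion of orienting each undirected edge via a small gadget is precisely the kind of patch that would make the paper's reduction fully rigorous.
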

\begin{proof}
    In \cite{koch2008maximum}, the authors show that the maximum $k$-splittable flows problem (MkSFP) is NP-hard. The MkSFP consists in finding the maximum $s-t$ flow, split over at most $k$ paths. We can reformulate the problem by finding $k$ paths, such that the graph induced by these paths has a maximum flow. In the APCP, we consider the notion of failures in comparison to the MkSFP. By adding a direct link $(s,t)$ with a capacity equal to the max flow in the original graph, as the capacity of $(s,t)$ is sufficiently large, the optimal solution of the APCP in this new graph, where $k+1$ paths are considered, contains $(s,t)$. The remaining $k$ paths will be the optimal solution of the MkSFP, which ends the proof.
\end{proof}

Note that APCP can be solved in polynomial time if $k=1$, 
as it is equivalent to the shortest path problem since only objective 2) can be considered. 
For $k=2$ the complexity is an open question.

\subsection{Compact formulation}
In this section, we propose a mathematical model to solve the APCP. We consider the following decision variables
\begin{itemize}
    \item $z$: the worst maximum flow, among every 1-link failure scenario, that can be sent on alternative paths. 
    \item $x_a^i$: equals to 1 if the $i$-th path uses link $a$ and 0 otherwise.
    \item $f_a^{\bar{a}}$: equal to the quantity of flow sent on link $a$ when link $\bar{a}$ fails.
\end{itemize}

The APCP is equivalent to the following integer linear program (P):
\vspace{-2mm}
{\small
\begin{align}
& \max M_1 z -\sum_{i\in K}\sum_{a\in A} c_a x_a^i \label{variant1:Objectif}\\
 & z \leq \sum_{a\in \delta^+(s)} f_a^{\bar{a}} & & \forall \bar{a} \in A, \label{variant1:worstMaxFlow}\\
& f_a^{\bar{a}} \leq b_a y_a & & \forall a \in A, 
 \label{variant1:LinkCapacity}\\ 
& y_a \leq \sum_{i\in K} x^i_a& &  \forall a\in A,\label{variant1:linkOpening}\\
& \sum_{a \in \delta^+(v)} \hspace{-3mm} x_{a}^i - \hspace{-3mm} \sum_{a \in \delta^-(v)}\hspace{-3mm} x_{a}^i
		= \begin{cases} 
			1 &\text{ if } v = s, \\
			-1  &\text{ if } v = t, \\
			0 &\text{ otherwise;}
		\end{cases}
		\hspace{-6mm}& &~~~ \forall v \in V, i \in K, \label{variant1:pathComputation}\\	
& \sum_{a \in \delta^+(v)} x_{a}^i 
		\leq  \begin{cases}  
			0  &\text{ if } v = t, \\
			1 &\text{ otherwise;}
		\end{cases}
		\hspace{-6mm}& & \forall v \in V, i \in K, \label{variant1:degreeNodes1}\\	
& \sum_{a \in \delta^-(v)} x_{a}^i 
		\leq  \begin{cases}  
			0  &\text{ if } v = s, \\
			1 &\text{ otherwise;}
		\end{cases}
		\hspace{-6mm}& & \forall v \in V, i \in K, \label{variant1:degreeNodes2}\\	
& x^i(C)\leq |C| - 1 & & \forall i\in K, \forall \text{ sub-tour }C,\label{variant1:sub-tour-elimination}\\
& \sum_{a \in \delta^+(v)}  f_{a}^{\bar{a}} -\sum_{a \in \delta^-(v)} f_{a}^{\bar{a}}
		=  	0 		& & \forall v \in V\setminus \{s,t\}, \bar{a} \in A,\label{variant1:flowConservation} \\	
& f_{\bar{a}}^{\bar{a}}=0 & & \forall \bar{a}\in A.\label{variant1:flowFailure}
\end{align}
}
Constraints \eqref{variant1:worstMaxFlow} allow getting the worst maximum flow over all possible link failures. Constraints \eqref{variant1:LinkCapacity} bound the flow variables to the link capacity. Constraints \eqref{variant1:linkOpening} ensure that any link used by the flow belongs to an alternative path. Constraints \eqref{variant1:pathComputation}-\eqref{variant1:degreeNodes2} compute the paths between $s$ and $t$. Constraints \eqref{variant1:sub-tour-elimination} represent the sub-tour elimination inequalities. Constraints \eqref{variant1:flowConservation} 
 represent the flow conservation constraints. Constraints \eqref{variant1:flowFailure} guarantee that no flow crosses the failed link.
Sub-tour elimination constraints are dynamically generated. During the exploration of the Branch-and-Bound tree, when an integer solution is found, we iterate over each link $(u,v)\in A$ and \sm{check with a Depth-First-Search (DFS) algorithm in the sub-graph defined by the solution if a loop appears. If a loop is detected then we add the associated sub-tour $C$. } 
\subsection{Benders model}
The model can be decomposed using Benders decomposition methods. It consists in keeping the binary variables in the master problem and moving the continuous variables and the associated constraints in the sub-problem. The master problem, then, will consist of computing the $k$ alternative paths for each failure, while the sub-problem consists in evaluating the max-flow over the paths given from the master problem. 
\subsubsection{Sub-problem}
Given an input solution $y^*$, for $\bar a\in A$, the sub-problem can be formulated as follows:
\vspace{-2mm}
{\small
\begin{align}
& \max \sum_{a\in \delta^+(s)} f_a^{\bar{a}} \label{bendersMasterSub:Objectif} \\ 
 & f_a^{\bar{a}} \leq b_a y^*_a & & \forall a \in A, 
 \label{bendersMasterSub:LinkCapacity}\\ 
& \sum_{a \in \delta^+(v)}  f_{a}^{\bar{a}} -\sum_{a \in \delta^-(v)} f_{a}^{\bar{a}}
		=  	0 		& & \forall v \in V\setminus \{s,t\},\label{bendersMasterSub:flowConservation} \\	
& f_{\bar{a}}^{\bar{a}}=0. & & \label{bendersMasterSub:flowFailure} 
\end{align}
}
Let $\alpha$ be the dual variable vector associated with \eqref{bendersMasterSub:LinkCapacity}. Then, the dual objective function of the sub-problem is:
{\small
$$\sum \limits _{a\in A} \alpha_{a }b_a y^*_a $$    
}
\subsubsection{Master problem}
The master problem is formulated as follows:
{\small
\begin{align}
& \max M_1  z - \sum_{i\in K}\sum_{a\in A} c_a x_a^i   \label{bendersMaster:Objectif}\\ 
& y_a \leq \sum_{i\in K} x^i_a& &  \forall a\in A,\label{bendersMaster:linkOpening}\\
& \sum_{a \in \delta^+(v)} \hspace{-3mm} x_{a}^i - \hspace{-3mm} \sum_{a \in \delta^-(v)}\hspace{-3mm} x_{a}^i
		= \begin{cases} 
			1 &\text{ if } v = s, \\
			-1  &\text{ if } v = t, \\
			0 &\text{ otherwise;}
		\end{cases}
		\hspace{-6mm}& & ~~~\forall v \in V, i \in K, \label{bendersMaster:pathComputation}\\
& \sum_{a \in \delta^+(v)} x_{a}^i 
		\leq  \begin{cases}  
			0  &\text{ if } v = t, \\
			1 &\text{ otherwise;}
		\end{cases}
		\hspace{-6mm}& & \forall v \in V, i \in K, \label{bendersMaster:degreeNodes1}\\	
& \sum_{a \in \delta^-(v)} x_{a}^i 
		\leq  \begin{cases}  
			0  &\text{ if } v = s, \\
			1 &\text{ otherwise;}
		\end{cases}
		\hspace{-6mm}& & \forall v \in V, i \in K, \label{bendersMaster:degreeNodes2}\\	
& x^i(C)\leq |C| - 1 & & \forall i\in K, \forall \text{ sub-tour }C,\label{bendersMaster:sub-tour-elimination}\\
 & z \leq \sum \limits _{a\in A}     \alpha^i_{a }  b_a y_a & & \forall \bar{a} \in A, i\in \Gamma_{\bar a}. \label{bendersMaster:benders}
\end{align} 
}
where $\Gamma_{\bar a}$ represents the set of dual solutions of the sub-problem associated with $\bar a$. \\
Consider a solution $x^*$ of the master problem. Let $\bar a^*\in A$ be the arc associated with the minimum objective value among all optimal solutions of the sub-problems for all $\bar a\in A$, i.e., $\bar a^* = \argmin\limits _{\bar a\in A}\{ \sum_{a\in \delta^+(s)} f_a^{\bar{a}}$ \}.  
\begin{prop}\label{min_benders}
Benders cut \eqref{bendersMaster:benders} associated with $\bar a ^*$ dominates all \eqref{bendersMaster:benders} associated with $\bar a\in A\setminus \{\bar a^*\}$.
\end{prop}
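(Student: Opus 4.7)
The plan is to use LP strong duality to rewrite the right-hand side of each candidate Benders cut \eqref{bendersMaster:benders} at the current master solution $y^*$ in terms of the corresponding sub-problem's optimal value, and then invoke the very definition of $\bar a^*$.

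First, I would note that for each $\bar a\in A$ the sub-problem \eqref{bendersMasterSub:Objectif}--\eqref{bendersMasterSub:flowFailure} is simply a max-flow LP on $G\setminus\{\bar a\}$ whose capacities are $b_a y^*_a$. Let $\alpha^{\bar a}$ be one of its optimal dual solutions. Strong LP duality then gives
$$\sum_{a\in A} \alpha^{\bar a}_a\, b_a\, y^*_a \;=\; \sum_{a\in \delta^+(s)} f_a^{\bar a}.$$
Applying this identity to $\bar a^*$ and using the definition $\bar a^* = \arg\min_{\bar a\in A}\sum_{a\in\delta^+(s)} f_a^{\bar a}$ yields immediately
$$\sum_{a\in A} \alpha^{\bar a^*}_a\, b_a\, y^*_a \;\le\; \sum_{a\in A} \alpha^{\bar a}_a\, b_a\, y^*_a, \quad \forall \bar a\in A.$$

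In Benders language, this says that at the current master point $y^*$ the cut tied to $\bar a^*$ is the most restrictive on $z$ among all candidate cuts, so: (i) if $z^*$ violates the $\bar a^*$-cut, it is the most violated one, and adding it cuts off $(x^*, y^*, z^*)$ at least as effectively as any other candidate; (ii) if the $\bar a^*$-cut is satisfied at $y^*$, the inequality above forces every other candidate cut to be satisfied too, hence the current master solution is feasible for the full Benders relaxation and certified optimal. Either way, generating only the cut for $\bar a^*$ is sufficient to drive the algorithm forward at each iteration.

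The only delicate point I anticipate is the LP-duality computation itself, in particular the handling of the equality \eqref{bendersMasterSub:flowFailure}: its free dual multiplier must be absorbed into the flow-conservation duals so that the cut \eqref{bendersMaster:benders} is expressed purely in terms of the non-negative capacity duals $\alpha^{\bar a}_a$ and $\alpha^{\bar a}_{\bar a}$ can be taken equal to zero. Once this bookkeeping is checked, the two displayed identities close the proof with no additional combinatorial argument on flows or cuts.
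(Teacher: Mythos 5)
The paper states Proposition~\ref{min_benders} without any proof, so there is no authorial argument to diverge from: your strong-duality derivation --- equating the right-hand side of each candidate cut at $y^*$ with the corresponding sub-problem's max-flow value via $\sum_{a\in A}\alpha^{\bar a}_a b_a y^*_a=\sum_{a\in\delta^+(s)}f^{\bar a}_a$, then invoking the definition of $\bar a^*$ as the minimizer --- is correct and is evidently the intended justification, since it is exactly what licenses the paper's separation routine (add only the $\bar a^*$ cut; if it is satisfied at $(z^*,y^*)$, strong duality guarantees every cut in every $\Gamma_{\bar a}$ is satisfied as well). You also handle the two genuinely delicate points properly: the dominance proved (and all the algorithm needs) is dominance at the current master point $y^*$ rather than for all $y$, where a global claim would in general be false; and the free multiplier of the failure equality \eqref{bendersMasterSub:flowFailure} has zero right-hand side and does not multiply $y$, so it never enters the cut, while any weight shifted onto $\alpha^{\bar a}_{\bar a}$ could only raise the right-hand side and thus never threatens validity, which follows from weak duality for any dual-feasible point.
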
 
For all  $\bar a\in A$, sub-problem can be solved using Ford-Fulkerson algorithm to compute a ``Min-cut'' in the sub-graph of $G$, where each link $a\in A$ with $\sum \limits _ {i\in K} x^i_a = 0$ is removed. \ym{Benders cuts \eqref{bendersMaster:benders} are exponential in number. They are generated dynamically. Indeed, we start solving the master problem without any benders cut. Once an integer solution is found, we solve the sub-problem for every $\bar a\in A$ and generate the associated benders cut. Thanks to Proposition \ref{min_benders}, we identify the dominated cut and if it is violated, we add it to the master problem. We iterate until there is no violated benders cut.   }   

\section{Relaxation of the problem}\label{sec:Problem_relaxation}
In this section, we propose to relax the APCP to obtain a problem that can be solved in polynomial time. First, let us introduce the relaxed problem. 
The relaxation of APCP (RAPCP) consists in replacing the objective $1)$ with the two following objectives:
\begin{itemize}[leftmargin=5.6em]
    \item[\textbf{Objective a}:] maximize the number of disjoint paths, 
    \item[\textbf{Objective b}:] {maximize the minimum capacity over all paths, i.e., $\max \min \limits _{p\in P, a\in p} cap(a)$.}
\end{itemize}
{Note that, objective 2) is still considered in RAPCP.} Hence, RAPCP is equivalent to finding $k'\leq k$ paths totally link-disjoint where objectives are considered in lexicographic order, objective a) $>>$ objective b). 
The approach proposed in this section does not need any bigM value to ensure the lexicographic order of objectives. 

Let us introduce the following notation:
\begin{itemize}
    \item $obj_1(P)$ the value of the objective 1) on the paths $P$. 
    \item $obj_{a}(P)$ the number of disjoint paths from $P$.
    \item $obj_{a,b}(P)$ the value of the objective a) and b) on the paths $P$, i.e.  $|obj_{a}(P)|\times\min_{p\in P} mincap(p)$. 
\end{itemize}

{In the following, we show that the optimal solution of the RAPCP gives a lower bound for the APCP. Thanks to bigM on objective 1), objective 2) can be ignored in this proof. 
\begin{prop}
  $obj_{a,b}(x^{RAPCP}) \leq obj_1(x^{APCP})$
\end{prop}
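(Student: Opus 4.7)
The plan is to treat $x^{RAPCP}$ as a feasible candidate for APCP and to lower-bound $obj_1$ on that candidate, chaining with the optimality of $x^{APCP}$. Recall that $x^{RAPCP}$ consists of $k' \leq k$ totally link-disjoint $s$--$t$ paths $P_1,\dots,P_{k'}$, and if $c^{\star} = \min_{i} mincap(P_i)$ then by definition $obj_{a,b}(x^{RAPCP}) = k'\, c^{\star}$. Since link-disjoint paths are in particular a valid family of $s$--$t$ paths, $x^{RAPCP}$ (padded with trivial copies if exactly $k$ paths are required) is feasible for APCP, so $obj_1(x^{APCP}) \geq obj_1(x^{RAPCP})$ by optimality, and it then suffices to lower-bound the right-hand side.

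The key step is to evaluate $obj_1(x^{RAPCP})$. Let $G'$ denote the subgraph induced by the links of $P_1 \cup \dots \cup P_{k'}$; by total link-disjointness, any failed link $\bar a$ belongs to at most one path $P_{i^{\star}}$, so the remaining $k'-1$ paths are intact and pairwise link-disjoint in $G' \setminus \{\bar a\}$. Each surviving $P_i$ can independently carry $mincap(P_i) \geq c^{\star}$ units of flow, hence the $s$--$t$ max flow in $G' \setminus \{\bar a\}$ is at least $(k'-1)\, c^{\star}$; minimising over $\bar a$ would give $obj_1(x^{RAPCP}) \geq (k'-1)\, c^{\star}$.

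The main obstacle is therefore to close the residual gap of $c^{\star}$ between this natural bound and the claimed $k'\, c^{\star}$. A promising route is a min-cut argument: the existence of $k'$ totally link-disjoint paths of bottleneck at least $c^{\star}$ forces every $s$--$t$ cut of $G$ to have capacity at least $k'\, c^{\star}$, so it suffices to show that APCP can choose a $k$-tuple of (possibly non-disjoint) paths whose induced subgraph realises this cut even after any single link deletion — exploiting the $k-k'+1$ extra path slots available to APCP to supply a ``repair'' alternative around the worst failure. Formalising this construction, and verifying that the resulting flow genuinely attains $k'\,c^{\star}$ on every graph where RAPCP does (with care about the case where the removed link carries flow strictly less than $c^{\star}$), is where the technical work concentrates; the rest of the proof is the clean lexicographic chaining above, with objective 2) safely absorbed by the bigM as the paper indicates.
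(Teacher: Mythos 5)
Your first paragraph coincides with the paper's own argument: pad the $k'$ totally link-disjoint RAPCP paths with copies of the shortest one to obtain a feasible APCP candidate $P^{*}_{RAPCP}$, chain through the optimality of $x^{APCP}$, and use disjointness to get $obj_{a,b}(x^{RAPCP}) = k'c^{\star} \leq \sum_{p\in P_{RAPCP}} mincap(p)$. The genuine gap is in your closing paragraph, where the decisive step is deferred rather than proved: the ``repair'' construction you propose to lift the post-failure bound from $(k'-1)c^{\star}$ to $k'c^{\star}$ cannot be formalised, because the target inequality is unattainable under the failure-aware reading you adopt. Take $G$ consisting of exactly $k'$ pairwise link-disjoint $s$--$t$ paths, each of bottleneck $c^{\star}$, with $k=k'$ (say $k'=2$ and unit capacities): any choice of $k$ paths induces a subgraph in which deleting one link leaves a worst-case max flow of $(k'-1)c^{\star} < k'c^{\star} = obj_{a,b}(x^{RAPCP})$. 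Your min-cut observation (every $s$--$t$ cut of $G$ has capacity at least $k'c^{\star}$) is correct but holds only before the deletion; removing $\bar a$ can lower the cut by $c^{\star}$, and in this instance there are neither spare path slots (the surplus is $k-k'=0$, not $k-k'+1$ as you count) nor spare links through which to route any repair. So no amount of technical work closes your ``residual gap of $c^{\star}$''; your weaker bound $(k'-1)c^{\star}$ is in fact the best general post-failure guarantee.

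The paper's proof sidesteps this issue entirely: it never re-examines failures after the padding step. It bounds the quantity of traffic that can be sent over the paths of $P^{*}_{RAPCP}$, writing $obj_1(P_{RAPCP})=\sum_{p\in P_{RAPCP}} mincap(p)$ --- i.e., it evaluates the flow supported by the chosen paths, not the worst residual flow after a link deletion --- and then chains $obj_{a,b}(P_{RAPCP}) \leq \sum_{p\in P_{RAPCP}} mincap(p)$ to conclude. Under that reading, your first paragraph already contains essentially the entire proof and your second and third paragraphs are superfluous; under the stricter reading you chose, the counterexample above shows the strengthened inequality itself fails, so the unresolved step in your plan is not a fixable technicality but precisely the point where the failure-aware version of the claim breaks down.
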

\begin{proof} 
Let $P_{RAPCP}$ be a solution of RAPCP. Since RAPCP computes totally disjoint paths, we may have $|P_{RAPCP}|<k$. Let $P_{RAPCP}^*$ be the solution obtained from $P_{RAPCP}$ by duplicating the shortest path in $P_{RAPCP}$, $k-k'$ times. It is easy to see that $P_{RAPCP}^*$ is a feasible solution for APCP.\\
Moreover, the quantity of traffic that can be sent over paths of $P_{RAPCP}^*$ is greater than or equal to $obj_1(P_{RAPCP})=\sum_{p\in P_{RAPCP}} mincap(p)$. This latter is also greater than or equal to $obj_{a,b}(P_{RAPCP})=|P_{RAPCP}|\times\min_{p\in P_{RAPCP}} mincap(p)$, as paths of $P_{RAPCP}$ are also link-disjoint. This is enough the show the result. 
\end{proof}
} 

APCP solutions for $k=2$ and $k=3$, shown in Fig.~\ref{APCP:fig} are the same for RAPCP. 
\begin{figure}[!t]
\centering
\includegraphics[page=2,scale=.20]{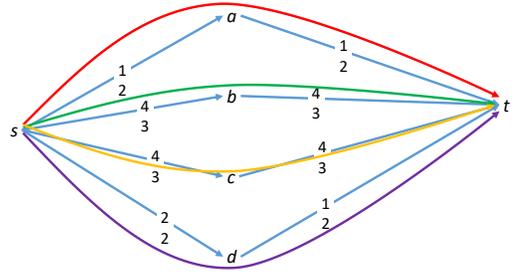}
\caption{Example of solution for APCP (green, yellow, and red paths) and RAPCP (red, purple, and green paths) where $k=3$.}
\vspace{-0.5cm}
\label{RAPCP:fig2}
\end{figure}
In Fig.~\ref{RAPCP:fig2}, we can see different solutions for APCP and RAPCP. The best solution for the APCP for $k=3$ is given by the green, yellow, and red paths where objective 1) is equal to 5 and objective 2) is equal to 18. For the RAPCP, the best solution is given by the red, purple, and green paths where objective a) is equal to 3, objective b) is equal to 2, and objective 2) is equal to 13. For this solution, objective 1) is equal to 4, which is worse compared to the solution found by the APCP.

To solve the RAPCP problem, we propose an exact polynomial algorithm based on the min-cost flow problem (MCFP). The goal is to solve a polynomial number of MCFP to reach the optimal value.


To introduce our algorithm, we first consider only objective a) and objective 2). This algorithm will be a sub-routine of the next algorithm, able to optimally solve the RAPCP.  
We add a dummy link from $s$ to $t$ with a capacity $k$ and a cost equal to $\sum_{a\in A}c_a$.
Let us consider a capacity of 1 on each original link. Solving a min-cost flow problem on this graph, where the flow is set to $k$ from $s$ to $t$, allows finding $k'$ disjoint paths, where $k'=k-f_d$ where $f_d$ is the flow on the dummy link. 
We can solve the RAPCP problem with objective a) and objective 2) with one call of the min-cost flow problem. We call this algorithm RAPCPa2. Let us denote $RAPCPa2(G,c,k)$ a function that returns the disjoint paths found by the  RAPCPa2 algorithm. \sm{As the paths are totally linked disjoint, it is easy to recover the paths from the min-cost flow solution.} 
    


As our objectives are in lexicographic order, by solving $RAPCPa2(G,c,k)$ we get the best solution for objective a). 
Thus, we can focus on objective b) by setting the arc capacity to $k-obj_a(RAPCPa2(G,c,k))$. To manage objective b), Algorithm~\ref{alg:capAll} filters the link with the smallest capacity at each iteration, in order to find the best solution for objectives a), b) and 2).
{\small
\begin{algorithm}
\caption{RAPCP algorithm }\label{alg:capAll}
\KwData{Graph $G=(V,E)$, cost vector $c$, number of paths $k'\leq k$ }
$P^*\gets RAPCPa2(G,c,k)$\;
$\ell_a \gets k-obj_a(P^*)$\;
$BestP \gets P^*$\;
remove all links with a capacity less than or equal to  $obj_{a,b}(P^*)$\;
\While{$P^*\neq \emptyset$}{
  $P^*\gets RAPCPa2(G,c,k)$\;
  \eIf{$k-obj_a(P^*)<\ell_a$}{
  $P^*\gets \emptyset$
  }{
$BestP \gets P^*$\;
  remove all links with a capacity less than or equal to  $obj_{a,b}(P^*)$\;
  }
}
\KwResult{$BestP$}
\end{algorithm}
}
\sm{Remark that RAPCPa2 and RAPCP algorithms can return less than $k$ paths. Thus, to get exactly $k$ paths, we complete with replicas of the shortest path to guarantee having the best objective 2.}


\section{Experimental results} \label{sec:Experimentation}

All the algorithms presented in the previous sections have been implemented in C++, using CPLEX as an LP-solver. They were tested on an Intel(R) Xeon(R) CPU E5-4627 v2 of 3.30GHz with 504GB RAM, running under Linux 64 bits. Only 1 thread has been used. In the following, we present evaluation results on randomly generated instances (topology and demands) \ym{starting from a random tree spanning tree for connectivity and randomly adding new links}. We vary the number of nodes, density, and the number of alternative paths.
We compare four methods: APCP (resp. APCP\_BENDERS) which corresponds to solving model (P) (resp. Benders Decomposition algorithm) using CPLEX, RAPCP (resp. RAPCPA2) to solve RAPCP with objective a) b) and 2) (resp. a) and 2) ) using Lemon\cite{dezsHo2011lemon} to solve the min-cost flow problem.


\sm{We do not consider some basic algorithms like KSP used in \cite{10302752}. In contrast with KSP, the main purpose of our approach is a kind of disjointness maximizing the flow after any failure. }

Due to the computational times of the different algorithms, we considered the following configurations:
\begin{enumerate}
    \item Configuration 1:
    \begin{itemize}
        \item algorithms: all,
        \item number of nodes: $20$,
        \item topology density: $40\%$ and $60\%$,
        \item $k$: $3$ and $6$. 
    \end{itemize}
    \item  Configuration 2:
    \begin{itemize}
        \item algorithms: APCP\_BENDERS, RAPCP and RAPCP2a
    \item number of nodes: $40$,
    \item topology density: $10\%$,
    \item $k$: $3$ and $6$.
    \end{itemize}
\end{enumerate}

For a given congested link and a given destination, we compare the performance of all the algorithms within a time limit of 200 seconds. 

\sm{For each graph, each value of $k$ and each number of alternative paths, we define an instance as a potentially-congested link-destination pair, i.e., we consider all the potential congestions that can introduce rerouting of traffic towards each potential destination. Thus, for each pair graph-number of alternative paths, we get $|A||V|$ instances.}

 \sm{We compare the performance for other Key Performance Indicators (KPI). The first one is the \emph{Cost}, defined as the sum of the costs of the alternative paths. The second one is \emph{Min Surviving Paths}, i.e., over all the possible link failures the minimum number of paths that do not cross a given failure. This represents the worst case of surviving alternative paths after each failure.  The third one is the \emph{Min Max Flow} where we consider, for each failure, the minimum of the maximum flow that can be sent on the remaining graph. The last KPI is the  \emph{Paths disjointness} which represents the number of disjoint paths before any failure. }




{
\begin{figure}[H]
\includegraphics[trim=0cm 0cm 0.7cm 0cm, clip=true, scale=.28]{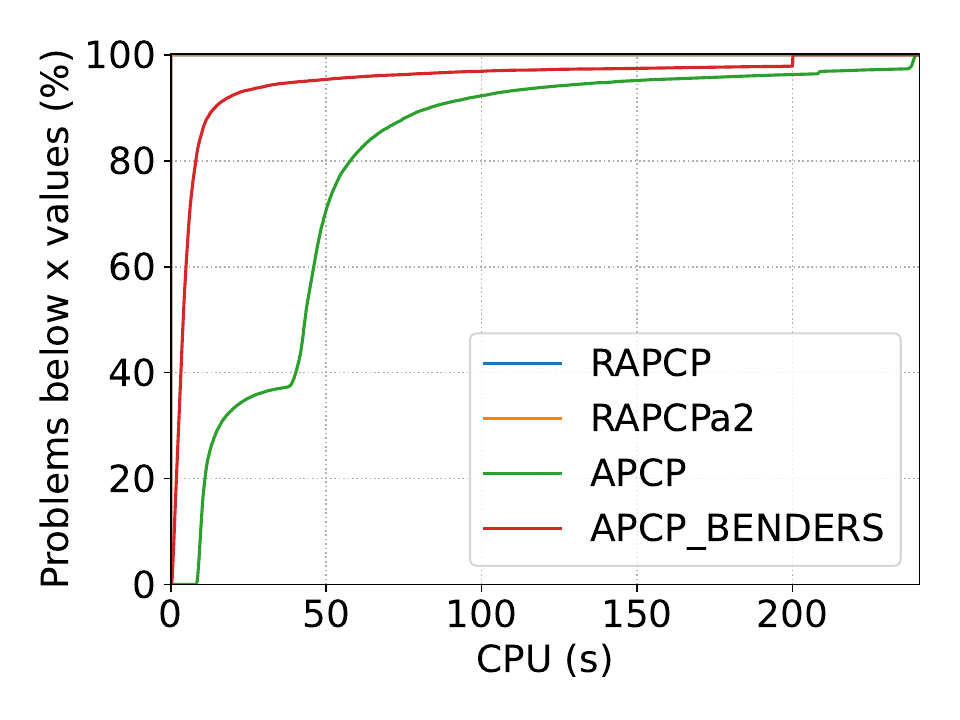}~
\includegraphics[trim=0.7cm 0cm 0cm 0cm, clip=true, scale=.28]{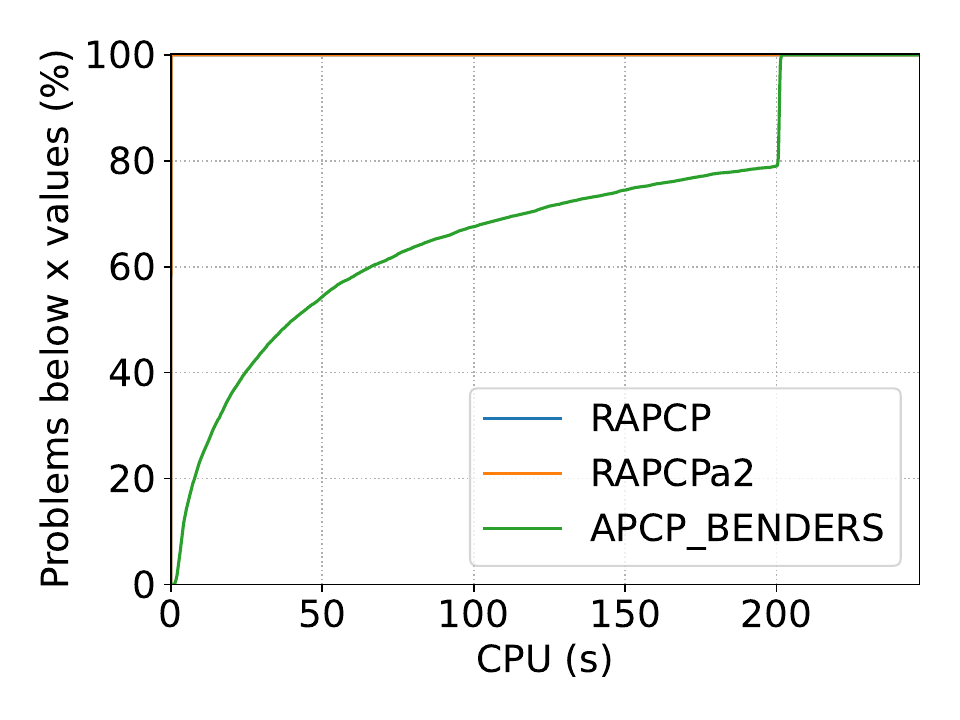}
\caption{Performance Profile comparing the CPU time on Configuration 1 (Left) and Configuration 2 (Right) instances. RAPCP and RAPCP2a are not visible since they reach 100\% in less than a second. }
\label{ALL:cpu20_40}
\end{figure} 
Figure \ref{ALL:cpu20_40} compares the CPU time over all instances of configurations 1 and 2. Clearly, we notice that the APCP method is the slowest one since it requires almost 100 seconds to solve $50\%$ of the instances. The APCP\_BENDERS is much faster since it solves more than $90\%$ of the instances in less than 100 seconds. Clearly, RAPCP and RAPCP2a are faster, solving all instances in less than one second. By comparing the RAPCP and Benders algorithm, the RAPCP is faster, than Benders on average, by 11 seconds on configuration 1 and by 77 seconds on configuration 2.}
\begin{figure}[H]
\includegraphics[trim=0cm 0cm 0.7cm 0cm, clip=true, scale=.28]{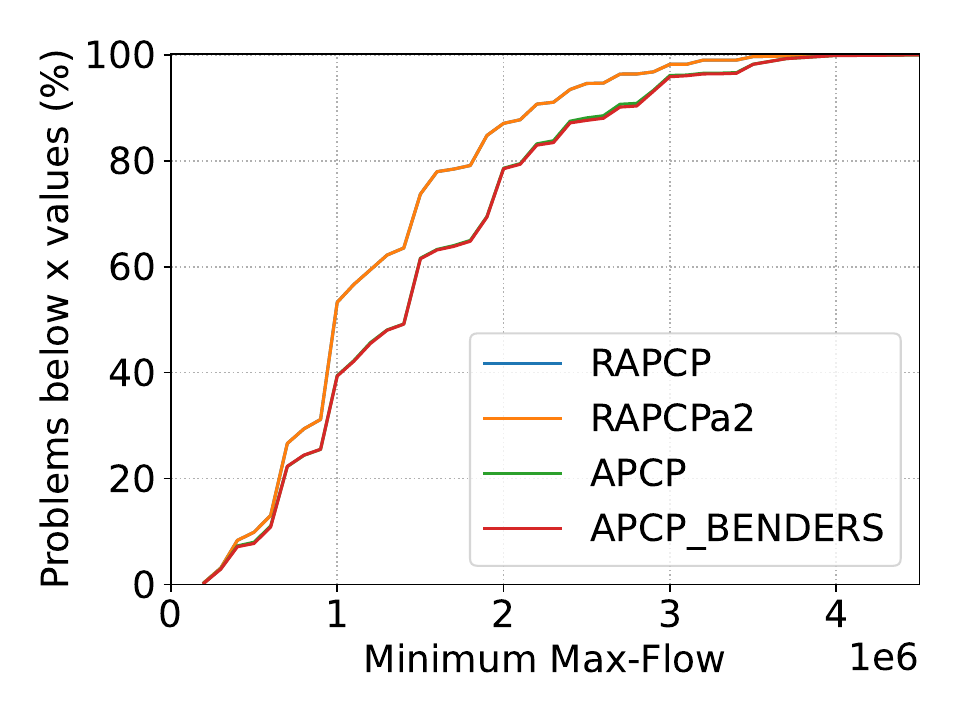}
\includegraphics[trim=0.7cm 0cm 0cm 0cm, clip=true, scale=.28]{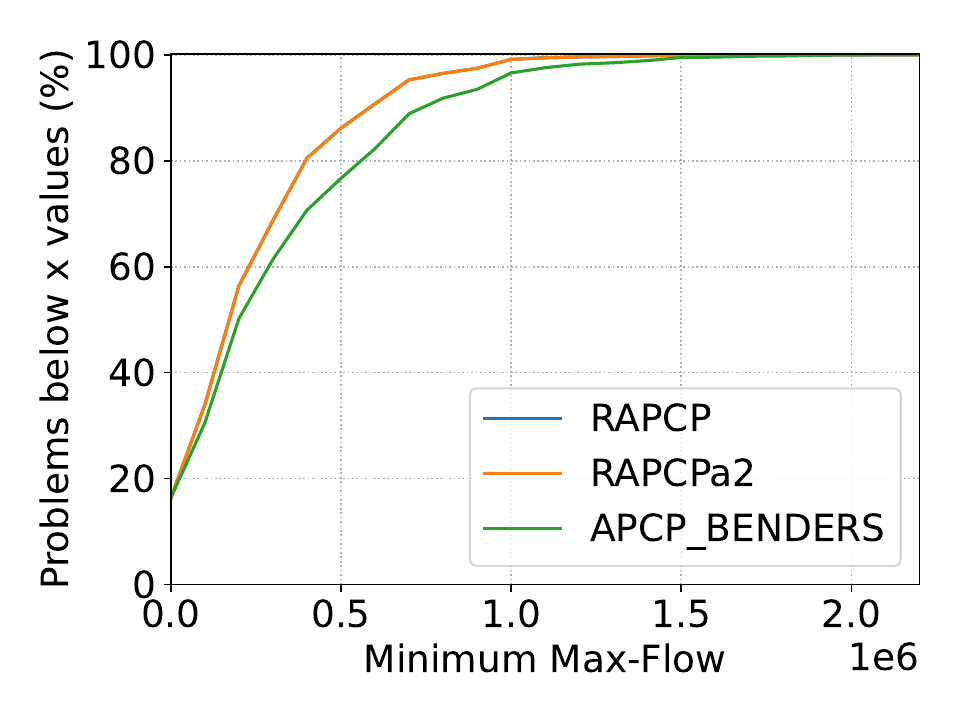}
\caption{Performance Profile comparing the max-flow after 1 link failure on Configuration 1 (Left) and Configuration 2 (Right) instances. RAPCP is not visible, overlapped by RAPCPa2 curves. Similarly for APCP in the left plot.} \label{ALL:maxflow20_40}
\end{figure} 
{Figure \ref{ALL:maxflow20_40} compares the maximum flow after 1 link failure over all instances of configurations 1 and 2. The orange line is always above the red or green line. This means, that the compact and Benders methods provide paths allowing flooding more traffic after any 1-link failure. On average, Benders' paths can flood around 248Gbps more traffic than RAPCP on configuration 1 and around 70Gbps more traffic than RAPCP on configuration 2.}

\begin{small}
\begin{table}[H]\centering
\begin{tabular}{l|c|c}
	& Configuration 1	&Configuration 2 \\
\hline
Cost	&-1.65 \%	&3.21 \% \\
Min Surviving Paths 	&15.04 \%&	23.87 \%\\
Min Max-Flow	&-19.53 \%	&-24.69 \% \\
Paths disjointness	&14.20 \%	&8.20 \% \\
\end{tabular}
\caption{Table displaying the gap between RAPCP and APCP\_BENDERS on Configurations 1 and 2  over several KPIs.}
\end{table}\label{table_kpi}
\end{small}
Table \ref{table_kpi} shows, for each KPI, the gap between two average values: RAPCP and APCP\_BENDERS, i.e., $\frac{\text{Average RAPCP}-\text{Average APCP$\_$BENDERS}}{\text{Average RAPCP}}\times 100$. We notice that the two methods give almost close paths, in terms of objective cost, on both configurations. However, we see a clear advantage of RAPCP on the number of alive paths and the number of disjoint paths after failure. In contrast, APCP allows flooding a much higher flow after link failure, compared to RAPCP.

\section{Conclusion} \label{sec:Conclusion}
{In this paper, we have investigated two variants of the alternative paths computation problem for a distributed congestion mitigation mechanism based on segment routing. For the first variant, i.e., the maximization of the minimum max-flow for any link failure, we proposed a compact formulation to model the problem, and showed that the problem is NP-hard. Then, for the sake of computational time, we developed a Benders decomposition algorithm. \ym{To provide a practical and scalable solution, we proposed a second variant that maximizes the number of disjoint paths and the minimum link capacity, after any link failure. W}e proposed two polynomial algorithms to solve it. We show that the second variant is a relaxation of the first one. Through numerical results, we showed that the first variant allows giving better alternative paths in terms of the quantity of flow that can be transferred after any link failure. However, the second variant gives better paths in terms of the number of disjoint paths and the number of remaining paths after any link failure. The second variant can be a good solution, especially for computation time, which is much faster. The second approach provides a good tradeoff for scalability whereas the first one allows solving optimally small networks.  }
\bibliographystyle{IEEEtran}
\bibliography{papers}

\end{document}